\newtheorem{proposition}{Proposition}
\newtheorem*{proposition*}{Proposition}
\newcommand{\mbb}{\mathbb}
\newcommand{\abs}[1]{\left| #1 \right|}
\newcommand{\norm}[1]{\left\| #1 \right\|}
\newcommand{\set}[1]{\left\{ #1 \right\}}
\newcommand{\inv}{^{-1}}
\newcommand{\adjoint}{^{\dagger}}
\newcommand{\grad}{\nabla}
\newcommand{\tensor}{\otimes}
\newcommand{\hilbert}{\mathcal{H}}
\renewcommand{\hat}{\widehat}
\renewcommand{\tilde}{\widetilde}
\newcommand{\domain}{\mathcal{D}}
\renewcommand{\S}{Section}
\title{Improving the variational quantum eigensolver using variational adiabatic quantum computing}
\author[1]{Stuart M. Harwood%
\thanks{Corresponding author: stuart.m.harwood@exxonmobil.com}}
\author[1]{Dimitar Trenev}
\author[1]{Spencer T. Stober}
\author[2]{Panagiotis Barkoutsos}
\author[3]{Tanvi P. Gujarati}
\author[4]{Sarah Mostame}
\author[4]{Donny Greenberg}
\affil[1]{ExxonMobil Corporate Strategic Research, Annandale, NJ 08801, USA}
\affil[2]{IBM Quantum, IBM Research Zurich, 8803 R{\"u}schlikon, Switzerland}
\affil[3]{IBM Quantum, IBM Research Almaden, San Jose, CA 95120, USA}
\affil[4]{IBM Quantum, IBM T.J. Watson Research Center, Yorktown Heights, NY 10598, USA}
\date{\today}
\begin{document}
\maketitle

\begin{abstract}
The variational quantum eigensolver (VQE) is a hybrid quantum-classical algorithm for finding the minimum eigenvalue of a Hamiltonian that involves the optimization of a parameterized quantum circuit.
Since the resulting optimization problem is in general nonconvex, the method can converge to suboptimal parameter values which do not yield the minimum eigenvalue.
In this work, we address this shortcoming by adopting the concept of variational adiabatic quantum computing (VAQC) as a procedure to improve VQE.
In VAQC, the ground state of a continuously parameterized Hamiltonian is approximated via a parameterized quantum circuit.
We discuss some basic theory of VAQC to motivate the development of a hybrid quantum-classical homotopy continuation method.
The proposed method has parallels with a predictor-corrector method for numerical integration of differential equations.
While there are theoretical limitations to the procedure, we see in practice that VAQC can successfully find good initial circuit parameters to initialize VQE.
We demonstrate this with two examples from quantum chemistry.
Through these examples, we provide empirical evidence that VAQC, combined with other techniques
(an adaptive termination criteria for the classical optimizer and a variance-based resampling method for the expectation evaluation), can provide more accurate solutions than ``plain'' VQE, for the same amount of effort.
\end{abstract}

\section{Introduction}

The variational quantum eigensolver (VQE) \cite{mcclean2016theory,peruzzo_2014,moll2018quantum} is a hybrid quantum-classical algorithm for approximating the minimum eigenvalue of a given Hamiltonian.
A parameterized quantum circuit, or ansatz, is tuned via a classical optimization method in order to minimize the expected value of the Hamiltonian.
This expected value is estimated on a quantum computer, and thus this evaluation is noisy
(at the very least, there is noise from the finite number of circuit samples or shots used to estimate the expected value).
Consequently, VQE has characteristics of a stochastic optimization problem.
VQE has been used to estimate the ground state molecular energy of small molecules \cite{peruzzo_2014,kandala_2017}, and as a heuristic for Ising problems \cite{moll2018quantum,nannicini2019performance}.

While VQE has shown promise in some small size problems, its ability to scale to larger problems is an open question;
ultimately, the classical optimization methods on which VQE relies are at best local methods, while the objectives are typically nonconvex with multiple local minima.
Convergence to a local, but not global, minimizer limits the accuracy of the estimated eigenvalue.

To address this fundamental shortcoming of VQE, as well as the parametric nature of certain applications, we adopt the concept of variational adiabatic quantum computing (VAQC).
In this approach, we use VQE to calculate the minimum eigenvalues of a parameterized Hamiltonian, taking advantage of the continuity of the problems to bootstrap or warm-start the local optimization methods.
At a high level, we see that VAQC uses a parameterized Hamiltonian to navigate an energy landscape in order to arrive at the minimum eigenvalue of a target Hamiltonian.
We draw connections to predictor-corrector methods for integrating differential equations and numerical continuation methods \cite{allgower2012numerical} by recasting the problem in a differential setting.
We introduce a few other practical numerical techniques (adaptive termination of the classical optimization procedure and resampling) for improving the performance of VQE.

As the name suggests, we borrow notions and ideas from adiabatic quantum computation (AQC) and adiabatic simulation;
see for instance \cite{albash2018adiabatic} for a review.
Improvements to adiabatic simulation for molecular systems was explored recently in \cite{matsuura2021variationally}.
However, the present work focuses on a computational protocol for the enhancement of variational algorithms used on near term gate-based quantum computers.
As we discuss in \Cref{sec:vaqc}, in AQC a quantum state evolves in some Hilbert space via the Schr\"{o}dinger equation.
In contrast, in VAQC a quantum state evolves in some reduced space (due to the ansatz) via a series of applications of VQE.
A similar method was proposed in~\cite{garcia2018addressing}, where the authors introduce the idea of adiabatically assisted VQE. 
The present work expands on this idea, provides different theoretical perspectives, and discusses more connections to other numerical methods.

An approach was recently introduced in \cite{chen2020demonstration} under the name ``adiabatic variational quantum computing.''
This approach is essentially a variational time evolution method, a hybrid quantum-classical method for simulating the Schr\"odinger equation.
However, our approach to VAQC in the present work is fundamentally different;
we do not try to simulate unitary evolution.
We provide a different theoretical perspective to motivate VAQC and to establish certain continuity properties of the optimal ansatz parameters.
Further, this alternate point of view permits us to propose a  numerical method that is different from typical numerical integration methods, and in particular the method proposed in \cite{chen2020demonstration}.

Compared to the quantum approximate optimization algorithm (QAOA) \cite{farhi2014quantum} and other work like \cite{mitarai2019generalization}, the VAQC approach described here permits in practice a wider range of ansatz forms.
Inspired by the adiabatic theorem, the main requirement on the ansatz is that it can represent the ground state of the initial Hamiltonian
(and, critically, the corresponding values of the ansatz parameters are known).
If there is flexibility in the problem to do so, these considerations suggest designing the ansatz and the initial Hamiltonian together to achieve this goal.
In the examples that we consider, we still use ``hardware-efficient'' ans\"atze \cite{kandala_2017}, and suggest a simple modification to how these ans\"atze are typically constructed.

The VAQC approach and techniques for augmenting VQE are independent of the classical optimization method
(within reason -- ultimately we still need noise-robust optimization methods).
Indeed, we demonstrate that these techniques can practically improve the performance of VQE
(for instance, an order of magnitude or more reduction in error)
using a few different optimization methods (namely stochastic gradient descent, simultaneous perturbation stochastic approximation, and the Nakanishi-Fujii-Todo method).
While \cite{sung2020exploration} notes that the performance of these methods can be sensitive to the hyperparameters used (like step sizes), we expect any hyperparameter tuning would benefit the augmented procedures as well, and only further improve the results we obtain.
Ultimately, the techniques we propose offer similarly or more accurate answers, for the same number of samples.
When considering the number of unique circuits evaluated, these techniques can offer an order of magnitude reduction.

Meta-VQE, recently proposed in \cite{cervera2021meta}, has a similar aim of efficiently calculating the ground state energies of a parameterized Hamiltonian.
However, the approach in meta-VQE is to prepend to the variational ansatz another parameterized circuit which aims to ``learn'' the parametric form of the Hamiltonian.
In this work, we focus on a Hamiltonian depending on a single parameter (``time'') and are less concerned with fitting or learning a functional form of the ground state energy.

As mentioned, VQE has been applied to Ising problems and similar classical combinatorial optimization problems.
Although we do not consider such examples, the present work could also be applied to these problems.
The recent work in \cite{egger2020warm} has a similar goal of improving variational algorithms for combinatorial optimization through warm-start strategies.
Combining those ideas with the present work is a subject for future research.

This work is organized as follows.
In \Cref{sec:vaqc}, we begin with a few theoretical considerations to describe and motivate VAQC.
In \Cref{sec:improvements}, we describe its numerical implementation as a bootstrapping procedure, along with the other practical improvements like resampling and adaptive termination.
In \Cref{sec:experiments}, we discuss some numerical results with simulated quantum devices.
In particular, we show how VAQC can improve on the performance of plain VQE to find a better quality ground-state energy of a small molecule.
Further, we demonstrate that in certain settings, like calculating the Born-Oppenheimer potential energy surface, VAQC and the proposed practical improvements can yield more consistently accurate results.
We then conclude with a few final thoughts.

\section{Variational adiabatic quantum computing}
\label{sec:vaqc}
Abstractly, the goal is to calculate
\begin{equation}
\label{eq:paropt}
    f^*(t) = \min_{\theta}\; f(\theta, t) \equiv \bra{\psi(\theta)} H(t) \ket{\psi(\theta)}
\end{equation}
for a range of values of $t$.
Here, $H$ is a parameterized Hamiltonian while $\psi$ is the VQE ansatz, a parameterized quantum state (trial wavefunction).
For simplicity, we assume that $t$ is a single real value
(typically time),
and for expositional purposes at least, we assume that it is scaled and shifted so $t \in [0,1]$.
To be concrete, we assume a system of $n$ qubits, so we define the Hilbert space of interest as $\hilbert \equiv (\mbb{C}^2)^{\otimes n}$, and so $H$ is a mapping from $[0,1]$ to the space of self-adjoint linear operators on $\hilbert$, while $\psi$ is a mapping from some domain $\domain$ to $\hilbert$.
We typically think of $\domain$ as a subset of $\mbb{R}^p$
(that is, the quantum state is parameterized by $p$ real parameters),
however we will see that some flexibility is useful.

We may consider Problem~\eqref{eq:paropt} as an approximation of an adiabatic computation;
in this setting
$H(t) = (1 - t)H_I + t H_T$,
where $H_I$ is an initial (or mixing) Hamiltonian, $H_T$ is the target Hamiltonian, and the goal is to determine the ground state of $H_T$.
As another application of \eqref{eq:paropt}, consider calculating the potential energy surface of a molecule, where $H$ is the Hamiltonian parameterized by some molecular coordinate $t$.

At its core, VAQC may be seen as a hybrid quantum-classical homotopy/numerical continuation method, with the goal of approximating the ground state of the parametric Hamiltonian at each instant in time.
At a high level, the same can be said of adiabatic quantum computing (AQC).
In VAQC, the approximation to the ground state is in a reduced space due to the ansatz, and is evolved by a hybrid quantum-classical method (e.g. VQE).
In AQC, however, the approximation to the ground state occurs in the full Hilbert space and evolves via unitary evolution (the Schr\"{o}dinger equation).

The basic requirement of VAQC is that the optimal ansatz parameters $\theta^*$ are continuous as a function of $t$.
Consequently, the solution of Problem~\eqref{eq:paropt} for some value of $t$ will be close to the solution of Problem~\eqref{eq:paropt} for some other nearby value $t'$.
This proves very useful when solving Problem~\eqref{eq:paropt} by local optimization methods like gradient descent.

To establish the required continuity properties, we begin with the following result, stating that there exists a continuous ground state of $H$.
See Appendix~\ref{sec:ce_proof} for its proof.
This result requires a continuously parameterized Hamiltonian with a strictly positive minimum gap between the ground state energy and first excited state energy.
As another, more technical, connection with AQC, similar conditions appear in versions of the adiabatic theorem \cite{jansen2007bounds}.

\begin{proposition}
\label{prop:continuous_eigenvector}
Assume $H$ is continuously differentiable on $[0,1]$.
Assume that there exists real constant $\delta > 0$ such that for each $t$, the lowest and second lowest eigenvalues (of $H(t)$), $\lambda_0(t)$ and $\lambda_1(t)$ respectively, are separated by $\delta$:
$\lambda_0(t) + \delta < \lambda_1(t)$.
Then there exists a continuous mapping $\phi^* : [0,1] \to \hilbert$ such that $\phi^*(t)$ is a ground state of $H(t)$ for all $t \in [0,1]$.
\end{proposition}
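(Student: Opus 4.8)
The plan is to use the finite dimensionality of $\hilbert$ together with the spectral gap to build the ground state from a spectral projection, and then to extract a single globally continuous unit eigenvector by a phase-matching argument on the interval. First I would record that for a continuous family of self-adjoint matrices the eigenvalues vary continuously in $t$, either from the Courant--Fischer min-max characterization or from continuity of the roots of the characteristic polynomial in its coefficients. In particular $\lambda_0$ and $\lambda_1$ are continuous on the compact set $[0,1]$, and the gap hypothesis $\lambda_0(t) + \delta < \lambda_1(t)$ forces the ground eigenvalue to be simple: its eigenspace is one-dimensional for every $t$, since the second-smallest eigenvalue is strictly larger.

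Next I would introduce the Riesz spectral projector onto the ground state,
\[
P(t) = \frac{1}{2\pi i}\oint_{\Gamma} (zI - H(t))\inv\, dz,
\]
where $\Gamma$ is any positively oriented contour in $\mbb{C}$ enclosing $\lambda_0(t)$ and excluding the rest of the spectrum. Because $\lambda_0$ is separated from the remaining eigenvalues by $\delta$, for each $t_0$ a circle of radius $\delta/2$ about $\lambda_0(t_0)$ is such a contour, and by continuity of the eigenvalues it remains admissible for all $t$ in a neighborhood of $t_0$. On that neighborhood $t \mapsto (zI - H(t))\inv$ is continuous uniformly for $z \in \Gamma$ (the resolvent norm is bounded, since $z$ stays a fixed distance from the spectrum), so $P$ is continuous there. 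Since $P(t)$ is intrinsically the orthogonal projector onto $\ker(H(t) - \lambda_0(t)I)$ and is independent of the admissible contour chosen, these local statements glue into a globally continuous, rank-one, self-adjoint projector $P$ on $[0,1]$.

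Finally I would turn this continuous family of rank-one projectors into a continuous unit vector $\phi^*$ satisfying $P(t) = \phi^*(t)\phi^*(t)\adjoint$. This is the crux, because a rank-one projector determines its eigenvector only up to a phase, and a globally consistent phase must be selected. Using uniform continuity of $P$ on the compact interval, I would take a partition $0 = t_0 < t_1 < \cdots < t_N = 1$ fine enough that $\norm{P(s) - P(t)} < 1$ whenever $s$ and $t$ lie in a common subinterval $[t_{k-1}, t_k]$. Fixing a unit vector $v_k$ in the range of $P(t_{k-1})$, the bound $\norm{P(t)v_k - v_k} \le \norm{P(t) - P(t_{k-1})} < 1$ shows $P(t)v_k \neq 0$ there, so $\phi_k(t) = P(t) v_k / \norm{P(t) v_k}$ is a continuous unit eigenvector on that piece; consecutive pieces agree at the shared endpoint up to a unimodular scalar, so multiplying each piece by an appropriate constant phase from left to right yields a single continuous $\phi^*$ on $[0,1]$.

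I expect the phase-patching step to be the main obstacle: it is exactly where the contractibility of $[0,1]$ (the vanishing of any monodromy obstruction) is used, and where one must check that only finitely many constant-phase adjustments are needed. An alternative I would keep in reserve is Kato's parallel-transport construction, solving $\dot\phi = \dot P\,\phi$ with a normalized initial condition, which produces $\phi^*$ directly but invokes the $C^1$ regularity of $H$; the elementary patching argument above needs only continuity of $P$, which is why I would favor it.
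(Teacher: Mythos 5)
Your proof is correct, and while it shares its first half with the paper's argument --- both construct the spectral projection onto the ground eigenspace as a Riesz contour integral of the resolvent, using the gap $\delta$ and continuity of the eigenvalues to keep the contour admissible on a neighborhood of each $t$ --- it diverges at the key step of extracting a continuous eigenvector from the continuous family of projectors. The paper follows Kato: it shows $P_0$ is continuously \emph{differentiable} (via $\frac{\partial R}{\partial t} = -R\dot{H}R$ and uniform convergence of the difference quotients on the contour), forms the commutator $Q = \dot{P}_0 P_0 - P_0\dot{P}_0$, solves the linear ODE $\dot{U} = QU$, $U(0)=I$, proves $P_0(t) = U(t)P_0(0)U\inv(t)$, and sets $\phi(t) = U(t)\phi^0$. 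You instead use only \emph{continuity} of $P$: uniform continuity on the compact interval gives a partition on which $\norm{P(t) - P(t_{k-1})} < 1$, so the local sections $P(t)v_k/\norm{P(t)v_k}$ are well defined and continuous, and finitely many constant unimodular adjustments glue them into a global section. Both are sound. Your route is more elementary and needs strictly less regularity (continuity of $H$ suffices, whereas the paper's ODE construction genuinely uses $H \in C^1$); the paper's route buys a continuously differentiable eigenvector and, because $U(t)$ transports the entire eigenspace, works verbatim even if the ground eigenvalue were degenerate, whereas your phase-patching step leans on the rank-one (simple eigenvalue) structure to compare consecutive sections at shared endpoints --- though you could sidestep even that dependence by choosing $v_{k+1} = \phi_k(t_k)$ as the seed for the next piece, so that no phase correction is ever needed.
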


The intuition is that if the ansatz $\psi$ can represent the ground state $\phi^*(t)$ for each $t$, and this ground state is continuous, then the optimal ansatz parameters $\theta^*$ \emph{should} be continuous as well.
However, making this precise essentially requires the continuity of $\psi\inv$, the inverse mapping of $\psi$, which is not a property of ans\"atze that typically has been studied.
Since we normally assume that $\psi$ itself also is continuous, this leads to the assumption that $\psi$ is a homeomorphism.
Furthermore, the assumption that the ansatz can represent the ground state is nontrivial as well, but one that is common to most variational methods.
Nevertheless, we state the following result to make it explicit what kinds of assumptions are needed.

\begin{proposition}
\label{prop:continuous_parameters}
Assume that the ansatz $\psi : \domain \to \hilbert_{\psi}$ is a homeomorphism for some subset $\hilbert_{\psi} \subset \hilbert$.
Assume that there exists a continuous mapping $\phi^*$ such that $\phi^*(t) \in \hilbert_{\psi}$ and is a ground state of $H(t)$ for all $t \in [0,1]$.
Then there exists a continuous mapping $\theta^* : [0,1] \to \domain$ such that $\theta^*(t)$ is an optimal solution to Problem~\eqref{eq:paropt} for each $t \in [0,1]$.
\end{proposition}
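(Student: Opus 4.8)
The plan is to construct $\theta^*$ explicitly by pulling the continuous ground state back through the inverse of the ansatz. Specifically, I would define
$\theta^*(t) = \psi\inv\!\left(\phi^*(t)\right)$
for each $t \in [0,1]$. This is well-defined precisely because of the standing hypothesis that $\phi^*(t) \in \hilbert_\psi$, the image of $\psi$, so that the preimage under the homeomorphism $\psi$ exists and is unique, and lands in $\domain$.

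First I would verify continuity. Since $\psi$ is assumed to be a homeomorphism onto $\hilbert_\psi$, its inverse $\psi\inv : \hilbert_\psi \to \domain$ is continuous. The map $\phi^* : [0,1] \to \hilbert_\psi$ is continuous by hypothesis. Hence $\theta^* = \psi\inv \circ \phi^*$ is a composition of continuous maps and is therefore continuous from $[0,1]$ into $\domain$, as required.

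Next I would verify optimality, which rests on the Rayleigh--Ritz variational principle. By construction $\psi(\theta^*(t)) = \phi^*(t)$, and since $\phi^*(t)$ is a (normalized) ground state of $H(t)$ we have $f(\theta^*(t),t) = \bra{\phi^*(t)} H(t) \ket{\phi^*(t)} = \lambda_0(t)$, the lowest eigenvalue of $H(t)$. For any other $\theta \in \domain$, the state $\psi(\theta) \in \hilbert$ is an admissible trial state, so the variational principle gives $f(\theta,t) = \bra{\psi(\theta)} H(t) \ket{\psi(\theta)} \ge \lambda_0(t)$. Combining these yields $f(\theta^*(t),t) = \lambda_0(t) \le f(\theta,t)$ for every $\theta \in \domain$, so $\theta^*(t)$ attains the minimum in Problem~\eqref{eq:paropt}; that is, $\theta^*(t)$ is an optimal solution (and incidentally $f^*(t) = \lambda_0(t)$, confirming that the ansatz achieves the true ground-state energy).

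The argument is short, and its only genuinely delicate points are bookkeeping rather than deep mathematics. I would be explicit about the normalization convention for the ansatz so that the variational inequality $\bra{\psi(\theta)} H(t) \ket{\psi(\theta)} \ge \lambda_0(t)$ is valid (i.e.\ that $\psi$ produces unit-norm states, as is standard for circuit ans\"atze). More importantly, I would emphasize where the homeomorphism hypothesis is consumed: continuity of $\psi$ alone would not suffice, since it is continuity of the \emph{inverse} $\psi\inv$ that transports continuity of $\phi^*$ into continuity of $\theta^*$. This is the one place the full strength of the assumption is needed, and it is the step I would flag as the crux.
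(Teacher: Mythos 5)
Your proof is correct and takes essentially the same route as the paper: define $\theta^*(t) = \psi\inv(\phi^*(t))$, get continuity from the composition with the continuous inverse, and conclude optimality from the variational principle. You simply spell out the Rayleigh--Ritz step and the normalization caveat that the paper leaves implicit.
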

\begin{proof}
As a homeomorphism, $\psi\inv$ is a continuous mapping $\hilbert_{\psi} \to \domain$.
Thus define $\theta^*(t) = \psi\inv(\phi^*(t))$ and we see that it must be a solution of Problem~\eqref{eq:paropt} and $\theta^*$ is continuous.
\end{proof}

Although we typically think of $\psi$ as a mapping on $\mbb{R}^p$, the abstraction of its domain to the space $\domain$ is useful in the preceding result.
For instance, consider an ansatz whose parameterization is via rotation gates and is periodic with period $2\pi$ with respect to each parameter.
Consequently, as a function on $\mbb{R}^p$, such an ansatz cannot be bijective
(and thus it cannot be a homeomorphism).
If we restrict the parameters to $[0, 2\pi]^p$, then we potentially impose discontinuities on the inverse map.
However, if we define $\domain$ as the quotient space of $\mbb{R}^p$ with respect to the equivalence relation defined by this periodicity%
\footnote{Specifically, the equivalence relation in this example is
$\theta \sim \theta'$
iff
for each $i \in \set{1,\dots,p}$ there exists $k_i \in \mbb{Z}$ such that $\theta_i + k_i 2\pi = \theta_i'$.
}
(along with the corresponding quotient topology),
then we have some hope that this ansatz could be a bijection, and subsequently a homeomorphism.
Then the conclusion of Proposition~\ref{prop:continuous_parameters} is that $\theta^*$ is continuous with respect to this quotient topology on $\domain$;
we can then define%
\footnote{In this case, the quotient map induced by the equivalence relation is a covering map of $\domain$, and so the path lifting property holds;
we can ``lift'' $\theta^*$ to $\vartheta^*$.
See Sections~53 and 54 of \cite{munkres}, in particular Lemma~54.1.
}
a continuous mapping $\vartheta^* : [0,1] \to \mbb{R}^p$ such that $\vartheta^*(t)$ is a solution of Problem~\eqref{eq:paropt} for each $t$.
Practically, this suggests in this setting that we should \emph{not} impose bounds on the parameters when performing the optimization of Problem~\eqref{eq:paropt}.

In AQC, the computation is initialized with a known ground state $\phi^*(0)$ of the initial Hamiltonian $H(0) = H_I$.
Analogously, in VAQC, we require initial ansatz parameters $\theta^0$ such that the ansatz represents the ground state of the initial Hamiltonian:
\[
\psi(\theta^0) = \phi^*(0).
\]
If $\psi$ is a homeomorphism, then as in the proof of Proposition~\ref{prop:continuous_parameters} we must have 
$\theta^0 = \psi\inv(\phi^*(0)) = \theta^*(0)$.

The requirement that the ansatz be able to represent the ground state at each instant in time may seem too strong, especially if the goal is to find the ground state of the final target Hamiltonian $H_T$ only.
Depending on the properties of the ansatz, a modified analysis is possible.
For instance, if the ansatz spans a linear subspace
(or, more accurately, the range $\psi(\domain)$ of the ansatz equals $\mathcal{L} \cap \mathcal{S}$, where $\mathcal{L}$ is some linear subspace of $\hilbert$ and $\mathcal{S}$ is the unit sphere),
then we are really interested in the spectral gap of the reduced Hamiltonian $A\adjoint H(t) A$, where $A$ is some linear operator from a reduced dimension space $\mbb{C}^d$ to $\hilbert$
(specifically, its columns form an orthonormal basis for $\mathcal{L}$).
Assuming that the reduced parameterized Hamiltonian $t \mapsto A\adjoint H(t) A$ satisfies  conditions analogous to those in Proposition~\ref{prop:continuous_eigenvector}, then there exists a continuous ground state $\phi^*_{\mathcal{L}}$ of the reduced Hamiltonian.
If $\mathcal{L}$ contains the ground state of $H_T$, then once again VAQC provides a method to find it.

Finally, since the motivation behind VAQC is to avoid sub-optimal local minimizers of Problem~\eqref{eq:paropt}, one question is whether the optimal ansatz parameters $\theta^*$ are locally unique, and if so, how close do they come to another local minimizer.
In the numerical implementation of VAQC, we must discretize the parameter $t$ and take finite steps, and so the local uniqueness of the optimal ansatz parameters has implications for the size of the allowable time steps.
While we leave a complete discussion for future work, we state a brief result that is relevant.

\begin{proposition}
\label{prop:unique}
Assume that we can identify $\domain$ with some open subset of $\mbb{R}^p$, $f$ is twice-continuously differentiable, and $\theta^* : [0,1] \to \domain$ is continuous.
Assume that for each $t \in [0,1]$,
$\theta^*(t)$ is a solution of Problem~\eqref{eq:paropt} and
the Hessian matrix $\grad^2_{\theta} f(\theta^*(t), t)$ is positive definite.
Then (for any norm $\norm{\cdot}$) there exists a positive constant $\epsilon$ such that for all $t$, no other local minimizer of $f(\cdot, t)$ comes within a distance less than $\epsilon$ of $\theta^*(t)$.
\end{proposition}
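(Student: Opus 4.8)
The plan is to exploit the positive definiteness of the Hessian at each $\theta^*(t)$ to argue that $\theta^*(t)$ is the unique critical point of $f(\cdot, t)$ in a ball of some positive radius, and then to upgrade that radius to one valid uniformly in $t$ by invoking the compactness of $[0,1]$ together with the continuity of $\theta^*$ and of the Hessian.

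First I would record the two compact sets that drive the argument: the graph $K = \set{(\theta^*(t), t) : t \in [0,1]}$ and its projection $K' = \theta^*([0,1])$, both compact as continuous images of $[0,1]$. Since $\domain$ is open and $K' \subset \domain$, the distance $d_0 = \mathrm{dist}(K', \mbb{R}^p \setminus \domain)$ is strictly positive (set $d_0$ to any finite value if $\domain = \mbb{R}^p$). Next, because $f$ is twice-continuously differentiable, the map $(\theta, t) \mapsto \lambda_{\min}\big(\grad^2_{\theta} f(\theta, t)\big)$ is continuous, and by hypothesis it is strictly positive on $K$; compactness then yields a uniform lower bound $\mu_0 = \min_{t} \lambda_{\min}\big(\grad^2_{\theta} f(\theta^*(t), t)\big) > 0$.

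The key step is to produce a single radius $\epsilon$ valid for all $t$. Fixing any $\rho \in (0, d_0)$, I would consider the closed tube $T_\rho = \set{(\theta, t) : t \in [0,1],\ \norm{\theta - \theta^*(t)} \le \rho}$, which is compact and contained in $\domain \times [0,1]$. On $T_\rho$ the Hessian is uniformly continuous, so (using that $\lambda_{\min}$ is $1$-Lipschitz in the operator norm) there exists $\epsilon \in (0, \rho]$ such that $\norm{\theta - \theta^*(t)} \le \epsilon$ forces $\lambda_{\min}\big(\grad^2_{\theta} f(\theta, t)\big) > \mu_0/2 > 0$ for every $t$. Finally I would rule out nearby critical points: using the fundamental theorem of calculus along the segment from $\theta^*(t)$ to $\theta$ together with $\grad_{\theta} f(\theta^*(t), t) = 0$, I would write $\grad_{\theta} f(\theta, t) = M(\theta, t)\,(\theta - \theta^*(t))$, where $M(\theta, t) = \int_0^1 \grad^2_{\theta} f\big(\theta^*(t) + s(\theta - \theta^*(t)), t\big)\, ds$ is an average of Hessians taken along that segment. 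For $\norm{\theta - \theta^*(t)} \le \epsilon$ the whole segment lies where the smallest eigenvalue exceeds $\mu_0/2$, so $M(\theta, t)$ is positive definite and hence nonsingular; thus $\grad_{\theta} f(\theta, t) = 0$ only at $\theta = \theta^*(t)$. Since every interior local minimizer of the differentiable function $f(\cdot, t)$ is a critical point, no local minimizer other than $\theta^*(t)$ lies within distance $\epsilon$, which is the claim.

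I expect the main obstacle to be precisely the uniformity in $t$: pointwise, strict local convexity immediately isolates each $\theta^*(t)$, but a priori the radius of the convex neighborhood could shrink to zero as $t$ varies. The device that resolves this is packaging the $t$-dependence into the compact tube $T_\rho$ and applying uniform continuity of the Hessian there; some care is also needed to keep the balls inside the open domain $\domain$, which is why the boundary distance $d_0$ is introduced at the outset.
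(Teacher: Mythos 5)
Your proposal is correct, and its overall architecture matches the paper's: both arguments reduce the problem to (i) extracting a uniform quantitative bound from the positive definiteness of $\grad^2_{\theta} f(\theta^*(t),t)$ via compactness of $[0,1]$ and continuity of $\theta^*$, and then (ii) using uniform continuity of the Hessian on a compact tube around the graph of $\theta^*$ to find a single radius $\epsilon$ within which critical points of $f(\cdot,t)$ are unique. Where you diverge is in the uniqueness mechanism itself. The paper defines the Newton-like map $g_t(\theta) = \theta - A(t)\inv \grad_{\theta} f(\theta,t)$ with $A(t) = \grad^2_{\theta} f(\theta^*(t),t)$, shows $\norm{\grad g_t} < \sfrac{1}{2}$ on the ball, and invokes the contraction mapping principle to conclude there is at most one fixed point (equivalently, one stationary point). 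You instead write $\grad_{\theta} f(\theta,t) = M(\theta,t)(\theta - \theta^*(t))$ with $M$ the segment-averaged Hessian, and observe that $M$ inherits positive definiteness, hence nonsingularity, from the uniform eigenvalue bound. Your route is somewhat more self-contained (no appeal to the Banach fixed-point theorem) and yields slightly more: positive definiteness of the Hessian throughout the ball gives strict convexity of $f(\cdot,t)$ there, so $\theta^*(t)$ is in fact the unique critical point and strict minimizer on the whole ball. The paper's contraction argument is marginally more flexible in principle, since it only requires the Hessian to stay within $\eta$ of the invertible matrix $A(t)$ rather than to remain positive definite, but under the stated hypotheses the two are interchangeable. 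You are also more careful than the paper on one point: you explicitly keep the tube inside the open set $\domain$ via the boundary distance $d_0$, a detail the paper's proof glosses over when it forms its compact set $\mathcal{K}$.
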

\begin{proof}
Define
$A(t) \equiv \grad^2_{\theta} f(\theta^*(t), t)$.
Then $A$ is positive definite-valued and continuous, and we can find a positive $\eta$ such that for all $t$ we have 
$\eta \le (2 \norm{A(t)\inv})\inv$.
For any $\alpha > 0$, the set 
$\mathcal{K} = \set{(\theta, t) : \norm{\theta - \theta^*(t)} \le \alpha, t\in[0,1]}$
is compact.
Thus $\grad^2_{\theta} f$ is uniformly continuous on $\mathcal{K}$, and so we can find an $\epsilon > 0$ such that for all $t$ and $\theta$ with $\norm{\theta - \theta^*(t)} < \epsilon$ we have
$\norm{\grad^2_{\theta} f(\theta,t) - A(t)} < \eta$.

Now, for each $t$, $\grad_{\theta} f(\theta^*(t),t) = 0$ by the necessary conditions for optimality \cite[Prop.~1.1.1]{bertsekas_nlp}.
For each $t \in [0,1]$ let
$g_{t} : \theta \mapsto \theta - A(t)\inv \grad_{\theta} f(\theta,t)$.
We see that $\grad_{\theta} f(\theta,t) = 0$ if and only if $\theta$ is a fixed point of $g_{t}$.
Then for all $t$, 
$\grad g_{t} (\theta) = (A(t) - \grad^2_{\theta} f(\theta,t)) A(t)\inv$,
and so
$\norm{\grad g_{t} (\theta)} < \eta \norm{A(t)\inv} \le \sfrac{1}{2}$
for all $\theta$ such that $\norm{\theta - \theta^*(t)} < \epsilon$.
It follows that $g_{t}$ is a contraction mapping on $\mathcal{B}_{\epsilon}(\theta^*(t))$, the open ball of radius $\epsilon$ centered at $\theta^*(t)$.
Thus $g_{t}$ has at most one fixed point in $\mathcal{B}_{\epsilon}(\theta^*(t))$ \cite[Thm.~9.23]{rudin};
we conclude that $\theta^*(t)$ is the only value at which the gradient of $f$ is zero.
Thus, for all $t$, $\theta^*(t)$ is the only local minimizer, and further the only stationary point of $f(\cdot, t)$, in $\mathcal{B}_{\epsilon}(\theta^*(t))$.
\end{proof}

\section{Implementation and practical improvements}
\label{sec:improvements}

In this section, we describe the implementation of VAQC as bootstrapping and the techniques of adaptive termination and resampling in detail.
While the previous section gives precise conditions under which VAQC has some theoretical guarantees, moving forward we will largely view VAQC from a more practical, heuristic perspective.
\subsection{Bootstrapping}
\label{sec:bootstrapping}
Again, recall that the motivation behind VAQC is to overcome the limitations of VQE when faced with a high-dimensional, nonconvex energy landscape with multiple local minima and stationary points.
The proposed bootstrapping procedure aims to navigate this landscape through the solution of a series of related optimization problems, using the solution of the previous problem to warm-start or ``bootstrap'' the solution of the next problem.
Specifically, we assume that we have an ordered sequence of values of $t$, $(t_1, t_2, \dots, t_K)$, at which $f^*$ is wanted.
Then for each $k$, if
$\tilde{\theta}^k$
is an approximate solution of
$\min_{\theta}\, f(\theta, t_k)$,
we use $\tilde{\theta}^k$ as the initial point when solving
$\min_{\theta}\, f(\theta, t_{k+1})$.
Further, as discussed in \Cref{sec:vaqc}, we assume that we have initial optimal parameters
$\theta^0 \in \arg\min_{\theta}\, f(\theta, 0)$
that we use as the initial guess when solving the problem at $t_1$.
Based on Propositions~\ref{prop:continuous_eigenvector} and \ref{prop:continuous_parameters}, we hope that the optimal parameters are continuous, and that $\tilde{\theta}^k$ is a reasonable initial estimate for the next step.

To draw connections to numerical continuation methods \cite{allgower2012numerical}, we discuss VAQC from the point of view of integrating differential equations.
From this perspective, we transform the problem of finding a minimizer of Problem~\eqref{eq:paropt} for each $t$, into the problem of finding a solution of the necessary conditions for optimality:
$\frac{\partial f}{\partial \theta_i}(\theta, t) = 0$
for each $i \in \set{1, \dots, p}$ and each $t \in [0,1]$.
We then differentiate these equations with respect to $t$ and look for a solution $\theta^*$ of the implicit ordinary differential equations
\begin{equation}
    \label{eq:implicit_ode}
    \sum_{j=1}^p \frac{\partial^2 f}{\partial \theta_j \partial \theta_i}(\theta^*(t), t)
    \frac{\partial \theta_j^*}{\partial t} (t) +
    \frac{\partial^2 f}{\partial t \partial \theta_i}( \theta^*(t), t)
     = 0,
	\qquad \forall (i,t) \in \set{1, \dots, p} \times [0,1].
\end{equation}
The initial conditions for the differential equations are set to $\theta^*(0) = \theta^0$.
In general (without the supporting theory of the previous section), a solution of \eqref{eq:implicit_ode} at best gives a \emph{local} minimizer at any $t$.
Similar equations are derived in \cite{mitarai2020theory};
as in that work, we could use an explicit Euler method to numerically integrate Equations~\eqref{eq:implicit_ode}.
Variational (real or imaginary) time evolution also relies on the numerical integration of implicit ordinary differential equations in the space of the ansatz parameters, and in practice often use an explicit Euler method \cite{chen2020demonstration,mcardle2019variational,yuan2019theory}.

The idea of integrating differential equations to find the zero of a system of algebraic equations is the basic idea in numerical continuation methods \cite{allgower2012numerical}.
A common approach there, however, is to use predictor-corrector methods for numerical integration.
The proposed bootstrapping method is akin to using $\tilde{\theta}^k$ as a zeroth-order predictor of the solution at $t_{k+1}$, followed by a local optimization method as the corrector iteration to improve this estimate.

Since the corrector iteration is in effect an application of VQE, it can be expensive, measured in terms of the number of objective function evaluations.
However, an Euler step requires an accurate estimate of the Hessian matrix of $f$, and we will see in practice (see discussion in \Cref{sec:lih}) that the cost of the corrector iteration is comparable to the cost of an explicit Euler step.
Meanwhile, compared to an explicit Euler method, a benefit of the bootstrapping method is that fairly large time steps can be taken,
since errors in the approximation tend to be corrected by the solution of the optimization problem, instead of accumulating as in Euler's method.
Further, depending on the application, the ansatz parameter estimates $\tilde{\theta}^k$ do not need to be highly accurate;
we only need them to be in the ``basin of attraction'' of the global optimal values.
For instance, in the setting of an adiabatic computation, we only need the \emph{final} ground state energy with high accuracy, and the VAQC approach is a heuristic to provide us with a high-quality initial point $\tilde{\theta}^K$ that can be used in a final application of VQE to the target Hamiltonian $H_T$.


\subsection{Adaptive termination}
Intuitively, if either the ansatz parameters or the objective estimates are varying below some tolerance, then we have determined the objective or the wavefunction as accurately as we need, and there is no need for further optimization.
This is the motivation behind adaptive termination.
This is a standard technique in deterministic optimization, although in the stochastic setting some modifications are necessary.

Since most optimization methods tend to incrementally improve a point, moving toward a minimum, the idea behind adaptive termination is to look at the change in a windowed average of past iterates.
The candidate ansatz parameter values $\theta^k$ and objective estimates $\hat{f}^k$ from the last $N_w$ iterations are saved (for some given value of $N_w$).
The method terminates when we have either
\[
    \frac{1}{N_w}\abs{\sum_{i=1}^{N_w} \hat{f}^{k-i} - \sum_{i=1}^{N_w} \hat{f}^{k+1-i}} \le \varepsilon_f
    \qquad \text{or} \qquad
    \frac{1}{N_w}\norm{\sum_{i=1}^{N_w} \theta^{k-i} - \sum_{i=1}^{N_w} \theta^{k+1-i}} \le \varepsilon_{\theta},
\]
for given tolerances $\varepsilon_f$ and $\varepsilon_{\theta}$.
Evidently the terms in these windowed averages cancel out and the expressions simplify.
In convex stochastic optimization, averaging the iterates is discussed in \cite{polyak1992acceleration}.

Appropriate values of $N_w$, $\varepsilon_f$, and $\varepsilon_{\theta}$ depend on the problem
(for instance, the scaling of the objective $f$ influences the choice of the tolerance $\varepsilon_f$).
In general, smaller values of $\varepsilon_f$ and $\varepsilon_{\theta}$ lead to more accurate results, at the cost of more iterations of the optimization method.
Meanwhile, larger values of $N_w$ can help average out noise in the estimates.

\subsection{Resampling}
Again, the objective function of the optimization problem~\eqref{eq:paropt} is evaluated on a quantum computer, and thus we only have access to a noisy estimator.
For simplicity, we assume this estimator $\hat{f}_m(\theta, t)$ is a sample average estimator consisting of $m$ independent ``samples.''
The number of circuit evaluations and measurements that are actually required for each sample depends on the quantum hardware and the decomposition of $H(t)$ into basis gates (e.g., as a sum of tensor products of Pauli operators).
We will largely ignore this, and when we refer to a ``circuit evaluation,'' ``shot,'' or ``sample,'' we are referring to some unit of quantum computational resource, $m$ of which is required to calculate $\hat{f}_m(\theta, t)$.
This ignores, for instance, that $H$ may be more efficient to evaluate (e.g. have a simpler decomposition as Pauli operators) for some values of $t$ than others.
However, we feel this abstraction is acceptable, as it makes the present analysis independent of the specific estimation scheme used.

After the optimization/corrector iteration finishes at a parameter value $t$, we have (an approximation of) optimal ansatz parameters $\tilde{\theta}$.
We still have the challenge of accurately estimating the value of $f(\tilde{\theta}, t)$;
the base estimator $\hat{f}_m$ may not be sufficiently accurate.

Fortunately, in many cases we have access to an estimate of the variance of $\hat{f}_m(\tilde{\theta}, t)$.
See for instance \S{}~V.A of Kandala et al.\ \cite{kandala_2017} for the details of the variance estimator in Qiskit \cite{qiskit}.
Assuming this variance is $\sigma^2(\tilde{\theta},t)$, the average of $N_r$ repeated independent estimators has variance $\sfrac{\sigma^2(\tilde{\theta},t)}{N_r}$.
Consequently, we can get an estimate of the objective with standard deviation less than or equal to $\varepsilon_r$ by taking
$N_r > \sfrac{\sigma^2(\tilde{\theta},t)}{\varepsilon_r^2}$.
In fact, considering that $\hat{f}_m$ already consists of $m$ samples, we take
$N_r > \sfrac{m \sigma^2(\tilde{\theta},t)}{\varepsilon_r^2}$
and construct an estimator $\hat{f}_{N_r}(\tilde{\theta}, t)$ with this many samples in the first place.
See also \cite{mcclean2016theory}, which discusses similar ideas as well as an interesting approach from a Bayesian perspective.

However, we reiterate that this proposed resampling procedure is only applied to the final, optimized ansatz parameters.
Another benefit of this construction of the estimator is that it adapts;
based on the variance (which depends on $t$ and $\theta$), we use just enough samples to yield a final objective estimate within a desired tolerance.
Using a noise-robust stochastic optimization method, few samples are needed during optimization when evaluating $\hat{f}_m(\theta, t)$.
But when accuracy matters, at the final evaluation of the optimal objective, we use a high number of samples automatically determined by this resampling procedure.

\section{Experiments}
\label{sec:experiments}
In this section we perform simulations to assess the effectiveness of VAQC and the improvements to VQE.
These are performed with the QasmSimulator in Qiskit v0.15 \cite{qiskit}.
In all experiments, we use $m=64$ samples to build the objective estimator $\hat{f}_m(\theta, t)$.
This is a relatively low number of samples and results in a fair amount of sampling noise.
While not one of our main conclusions, this permits us to assess how noise-robust the various methods are.

The two examples we consider demonstrate two common applications of VQE in quantum chemistry:
calculating a ground state energy for a single molecular configuration, and calculating an energy surface as a function of molecular coordinates.
In the first case, we show how VAQC, with an artificially parameterized Hamiltonian, can yield more accurate energy values than plain VQE.
In the second case, we take advantage of the naturally parametric Hamiltonian and apply VAQC to robustly calculate an energy surface.

\subsection{Optimization methods}
We will use three optimization methods in our experiments: stochastic gradient descent (SGD), simultaneous perturbation stochastic approximation (SPSA), and Nakanishi-Fujii-Todo (NFT).
See Appendix~\ref{sec:optimization_details} for more details.
Each method is well-suited to stochastic settings.
Each method may be augmented with adaptive termination, bootstrapping, resampling, or a combination thereof;
these are denoted with the suffixes ``A'', ``B'', or ``R'', respectively.
For instance, ``NFT-BR'' refers to NFT with bootstrapping and resampling applied.

There are many other optimization methods that can be applied in this context;
see \cite{lavrijsen2020classical,sung2020exploration} for more extensive discussion along these lines.
Again, our intention is not to exhaustively compare or benchmark optimization methods, but rather to assess the effect of VAQC and the other augmentations.
The three methods are meant to be representative of a few different classes of optimization methods:
strongly local, gradient-based methods (SGD);
derivative-free methods with some random search (SPSA);
and domain-specific methods with some guarantee of finding a global minimum (NFT).

\subsection{VAQC calculation of ground state of lithium hydride}
\label{sec:lih}
We use the VAQC approach to robustly compute the ground state energy, and compare this with VQE.
The target Hamiltonian $H_T$ acting on four qubits comes from the electronic energy of $\mathrm{LiH}$ when the distance between the nuclei is 2.5\AA.
This is not the equilibrium bond length;
we specifically choose this internuclear distance since it can be a challenging point for VQE to find a ground state energy.
See Appendix~\ref{sec:hamiltonian} for its full specification.
Errors are with respect to the full configuration interaction energy
(i.e., the lowest eigenvalue of the Hamiltonian, calculated with classical methods).
The objective function is measured in units of Hartree, and so we choose the tolerances with the aim of achieving an accuracy of order $10^{-3}$.
Thus, for instance, we will use $\varepsilon_r = 5 \times 10^{-4}$ in the resampling procedure, whenever it is used.

\subsubsection*{Procedure}
We use a four-layer RY ansatz with a few additional gates appended to the end of the circuit.
This results in an ansatz with $p = 20$ parameters.
(see \Cref{fig:lih_circuit}).
Note that $\psi(0) = \ket{0110}$, since $R_Y(0) = I$, each CNOT does not modify the initial state $\ket{0000}$, and the final $X$ gates flip the second and third qubits.
We define the initial Hamiltonian as
\[
H_I =
  (0.568) I^{\tensor 4}
+(-0.102) Z_1
+ (0.245) Z_2
+ (0.102) Z_3
+(-0.245) Z_4,
\]
which indeed has $\psi(0)=\ket{0110}$ as a ground state.
The motivation for this form of $H_I$ is that the target Hamiltonian as a weighted sum of tensor products of Pauli operators has the form
$
H_T = H_I + (\text{other terms}),
$
where ``other terms'' captures the rest of the terms that do not include $Z_1$, $Z_2$, $Z_3$, $Z_4$, or $I^{\tensor 4}$.
Furthermore, all other terms have a weight with absolute value less than $0.2$, and so $H_I$ is capturing some of the more significant terms.
While the Hartree-Fock state and corresponding Hamiltonian are another option for constructing the ansatz and initial Hamiltonian, the present choice demonstrates that even a mildly problem-specific heuristic for constructing the initial Hamiltonian is successful.

\begin{figure}
\centering
\begin{quantikz}
\lstick{$q_1:$ \ket{0}} &\gate{R_Y(\theta_1)}  &\ctrl{1}\gategroup[wires=4,steps=4, style={dashed,rounded corners}]{repeat $4\times$}
                                                         &\qw      &\qw      &\gate{R_Y(\theta_{4r+1})} &\qw     &\qw\\
\lstick{$q_2:$ \ket{0}} &\gate{R_Y(\theta_2)}  &\gate{X} &\ctrl{1} &\qw      &\gate{R_Y(\theta_{4r+2})} &\gate{X}&\qw\\
\lstick{$q_3:$ \ket{0}} &\gate{R_Y(\theta_3)}  &\qw      &\gate{X} &\ctrl{1} &\gate{R_Y(\theta_{4r+3})} &\gate{X}&\qw\\
\lstick{$q_4:$ \ket{0}} &\gate{R_Y(\theta_4)}  &\qw      &\qw      &\gate{X} &\gate{R_Y(\theta_{4r+4})} &\qw     &\qw
\end{quantikz}
\caption{Ansatz for $\mathrm{LiH}$ ground state study.}
\label{fig:lih_circuit}
\end{figure}
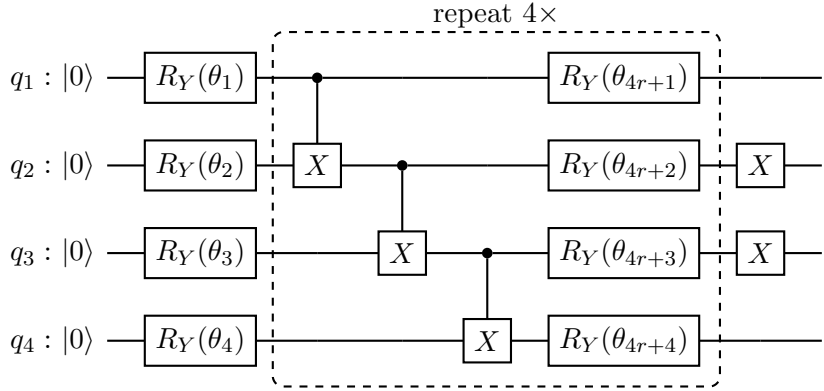

Subsequently, we define the parameterized Hamiltonian:
$H(t) = (1 - t) H_I + t H_T$.
The ``time'' points are taken to be $t_k = 1 - (1 - 0.05 k)^3$ for $k \in \set{1,\dots, 20}$;
this is akin to using a dimensionless time step of $0.05$, along with a schedule of $s : t \mapsto 1 - (1-t)^3$.
While the term ``schedule'' is borrowed from adiabatic computation, the motivation behind its form in the present setting is that we would like to approach the target Hamiltonian gradually, without taking any large steps that might inhibit the convergence of the optimization-based corrector iteration.
Consequently, for a fixed number of time steps, we posit that they should be distributed so as to avoid large perturbations near the target Hamiltonian
(this motivates a schedule of the form $t \mapsto 1-(1-t)^a$ with $a>1$).
This should also help with the relatively slow convergence of SGD, which we use for the corrector iteration.

The corrector iteration in specific is SGD-AB, with a maximum of $2p = 40$ iterations, $N_w = 10$, $\varepsilon_f = 10^{-3}$, and $\varepsilon_{\theta} = 10^{-3}$.
SGD requires $2p + 1 = 41$ objective estimates per iteration (using the parameter shift rule to estimate the gradient).
Thus, each corrector iteration requires up to $2p(2p + 1)$ objective estimates.
This is roughly the same amount of effort required to estimate the Hessian matrix of $f$, which would be required for an Euler method applied to the differential equation~\eqref{eq:implicit_ode}:
each of the $\sfrac{p(p+1)}{2}$ unique entries of the symmetric Hessian require up to four objective estimates via an iterated parameter-shift rule to get a second derivative
(although see \cite{mitarai2020theory} for an alternative approach that requires fewer estimates if an ancilla qubit is available).
However, in our approach, the adaptive termination means that we typically use fewer than the $40$ allowed iterations.

With these settings, VAQC gets close to a good solution, but requires some polishing;
consequently, at the last time point $t_{20} = 1$, we apply SGD (with adaptive termination, bootstrapping, and resampling) with a maximum of $80$ iterations, $N_w = 10$, $\varepsilon_f = 10^{-4}$, and $\varepsilon_{\theta} = 5 \times 10^{-4}$.

VAQC will require more computation than ``plain'' VQE applied directly to the target Hamiltonian $H_T$.
Given a budget of iterations, the best we can do with VQE is to repeatedly apply it with different randomly chosen initial points.
Thus, we compare against plain VQE, using SGD-AR and the same ansatz, but with initial ansatz parameters uniformly randomly chosen from $[-\pi,\pi]^p$.
We take $N_w = 10$, $\varepsilon_f = 10^{-4}$, $\varepsilon_{\theta} = 5 \times 10^{-4}$
(the same values used for the final time point of VAQC).

\subsubsection*{Results}
VAQC robustly produces a high-quality solution;
over five independent trials, VAQC yields a final objective value within $2.5 \times 10^{-3}$ Hartree of the exact energy.
See \Cref{tab:vaqc}.
Over these five repetitions, VAQC requires $1913$ total iterations of SGD.
Giving this budget of iterations to plain VQE, we are able to sample $33$ different initial ansatz parameters, with an average of $58$ iterations required to converge VQE.
However, even these multiple applications of VQE with different initial ansatz parameters do not produce as accurate an answer;
the minimum absolute error that is achieved is $9.5 \times 10^{-3}$ Hartree.
Furthermore, the results are not nearly as consistent, with errors as large as $0.1641$ Hartree.
This supports the conclusion that the VAQC procedure is capable of navigating a non-trivial landscape in order to improve upon the performance of plain VQE.

\begin{table}[h]
\caption{Accuracy of VAQC and VQE for $\mathrm{LiH}$ ground state energy}
\label{tab:vaqc}
\centering
\begin{tabular}{ccc}\hline
    Method  & Mean absolute error (Ha)   & Min / Max absolute error (Ha) \\
\hline
    VAQC    & 0.0016                     & 0.0004 / 0.0025 \\ 
    VQE     & 0.0640                     & 0.0095 / 0.1641 \\
\hline
\end{tabular}
\end{table}

VAQC requires on average $19.13$ iterations per time step, or $19.13 \times 41 = 784.33$ function evaluations per time step.
As noted, an explicit Euler step would require at least $4(\sfrac{p(p+1)}{2}) = 840$ function evaluations per time step (if an ancilla qubit is not available).
Thus, we see that VAQC (with SGD, at least) has a per time step cost comparable to an explicit Euler scheme.

\subsection{Potential energy surface of molecular hydrogen}
In this problem, our goal is to calculate an accurate Born-Oppenheimer potential energy surface for $\mathrm{H}_2$.
We will use this as an opportunity to assess the impact of the various proposals (bootstrapping/VAQC, adaptive termination, and resampling).

The parameter $t$ corresponds to internuclear distance, and we seek the electronic energy at the points
$\set{0.4, 0.55, 0.7, 0.85, 1.2, 1.6, 1.9, 2.2, 2.6, 3.0, 4.0, 5.0}$ (\AA).
We construct the Hamiltonian using PySCF \cite{pyscf} and the STO-3G basis set \cite{hehre1969self,collins1976self}, following the procedure outlined in \cite{kandala_2017}.
This Hamiltonian is then transformed into the particle/hole representation \cite{barkoutsos_2018} and mapped to a qubit Hamiltonian using parity mapping \cite{bravyi_2002, seeley_2012}, which results in a two qubit representation.
We use a one-layer RY ansatz with additional gates prepended that prepare the Hartree-Fock state%
\footnote{As discussed in the lithium hydride study 
(\Cref{sec:lih}), arguably a better ansatz would be one with the gates that prepare the Hartree-Fock state appended to the \emph{end} of the circuit.
In fact, the methods considered do perform better with this ansatz, but it subsequently makes it harder to see the effects of the proposed improvements like bootstrapping.}.
See \Cref{fig:h2_circuit}.
Again, errors are with respect to the full configuration interaction energy
(calculated classically).
We target an accuracy level of $10^{-3}$ Hartree.
Thus, for instance, we will use $\varepsilon_r = 5 \times 10^{-4}$ in the resampling procedure, whenever it is used.
While in general, lower error is better, we are more interested in consistently low error, and further, a method that achieves error much lower than $10^{-3}$ could be seen as wasting effort.


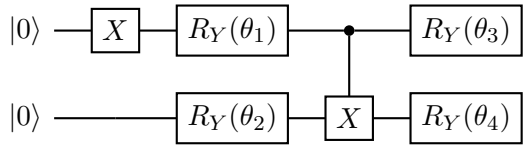
\begin{figure}
\centering
\begin{quantikz}
    \lstick{\ket{0}} & \gate{X} & \gate{R_Y(\theta_1)} & \ctrl{1} & \gate{R_Y(\theta_3)} \\
    \lstick{\ket{0}} & \qw      & \gate{R_Y(\theta_2)} & \gate{X} & \gate{R_Y(\theta_4)}
\end{quantikz}
\caption{Ansatz for $\mathrm{H}_2$ potential energy surface study.}
\label{fig:h2_circuit}
\end{figure}

\subsubsection*{Procedure}
For any method using adaptive termination, we take $N_w = 10$, $\varepsilon_f = 10^{-4}$, and $\varepsilon_{\theta} = 5 \times 10^{-4}$.
Any method using bootstrapping begins at the smallest distance and proceeds in order of increasing distance.
For any method or iteration that is not bootstrapped, the initial guess for the ansatz parameters is zero.
While $\psi(0)$ is not exactly the ground state for the initial Hamiltonian at $0.4$\AA, it is a reasonable starting point.

For each optimization method, we at least look at the performance of the baseline optimizer, optimizer plus resampling, and optimizer plus resampling plus bootstrapping.
In the particular case of SGD, we also look at SGD plus resampling, bootstrapping, and adaptive termination
(for NFT and SPSA we use the current implementations in Qiskit, which would have to be modified to allow adaptive termination).

To assess the robustness of the methods, we perform five independent repetitions of the potential energy surface calculation for each method.
To be as fair as possible in our comparisons of the various methods, we make sure that each method uses approximately the same number of total samples in the calculation of the potential energy surface, over all the independent repetitions.
Since it is only SGD-ABR with its adaptive termination for which this budget of samples is not known beforehand, we run this method first and allocate its effort to the others.
While we do not directly control how many samples are used for resampling, two methods that use resampling will use approximately the same total number of samples during the resampling phase.

Thus we run SGD-ABR first, calculate the total number of samples used, and set an iteration limit (which is the same for each internuclear distance) for SGD, SPSA, and NFT so that they approximately use the same total number of samples
(taking into account the different number of function calls per iteration for each method).
Meanwhile, we take the number of samples used only during the optimization phase from SGD-ABR, and again set an iteration limit for SGD-R, SGD-BR, SPSA-R, SPSA-BR, NFT-R, and NFT-BR, so that the optimization phases of each method use the same number of samples
(while the number of samples used in the resampling phases are roughly similar).

\subsubsection*{Improvements in accuracy}


\begin{figure}
\centering
\includegraphics[width=\textwidth]{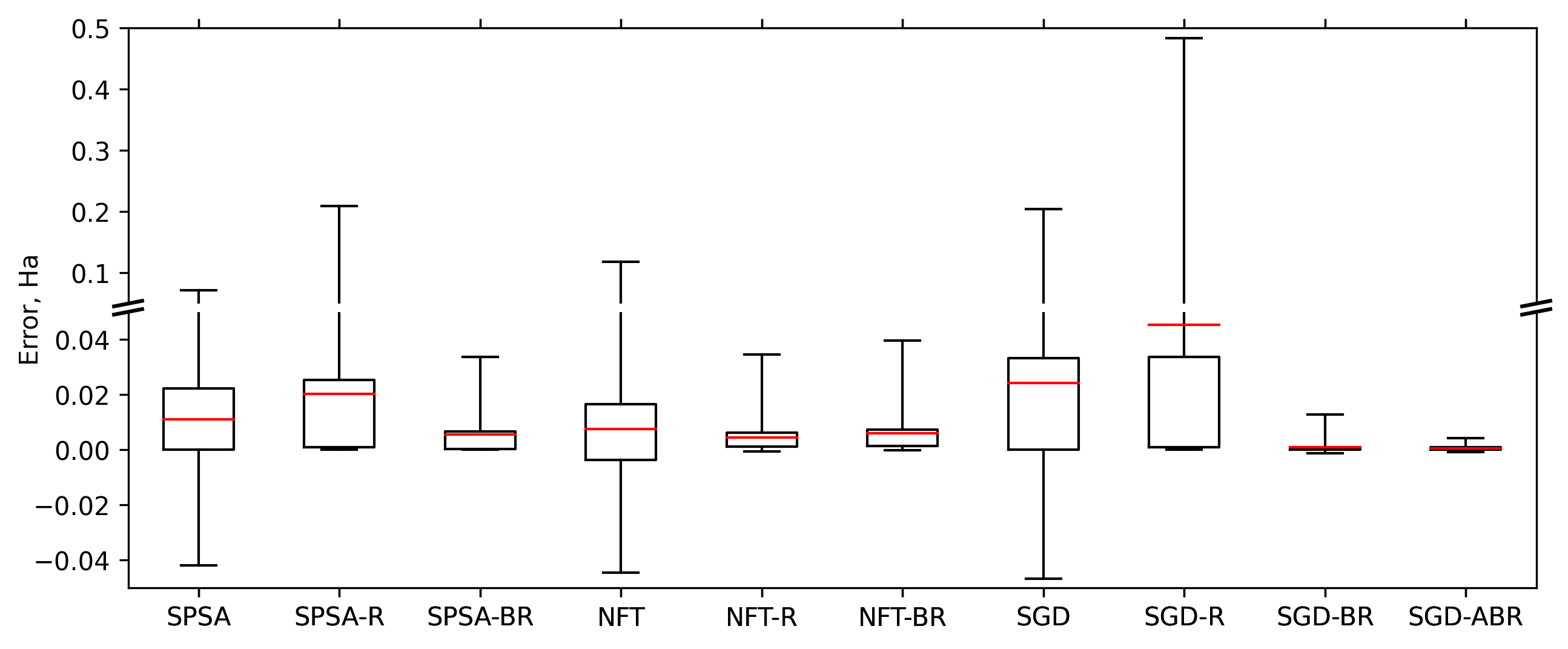}
\caption{Box and whisker plot for error (over independent repetitions and internuclear distances) for various methods for $\mathrm{H}_2$ potential energy surface.
Boxes represent middle quartiles, whiskers represent minimum to maximum, and the central line is the mean.
Note the change in scaling at $0.05$ Hartree on the y-axis.}
\label{fig:boxplot}
\end{figure}

\begin{figure}
\centering
\includegraphics[width=0.5\textwidth]{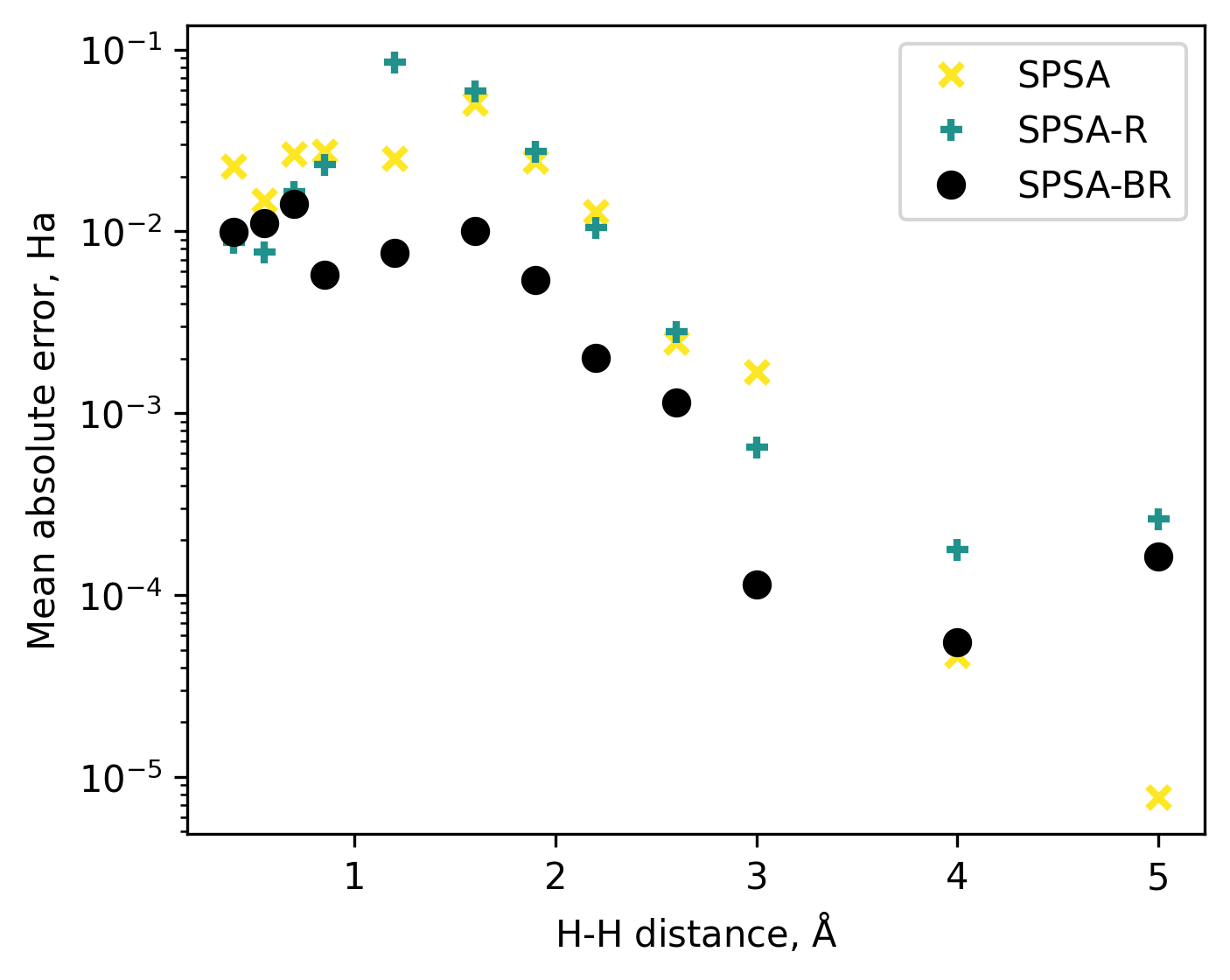}~
\includegraphics[width=0.5\textwidth]{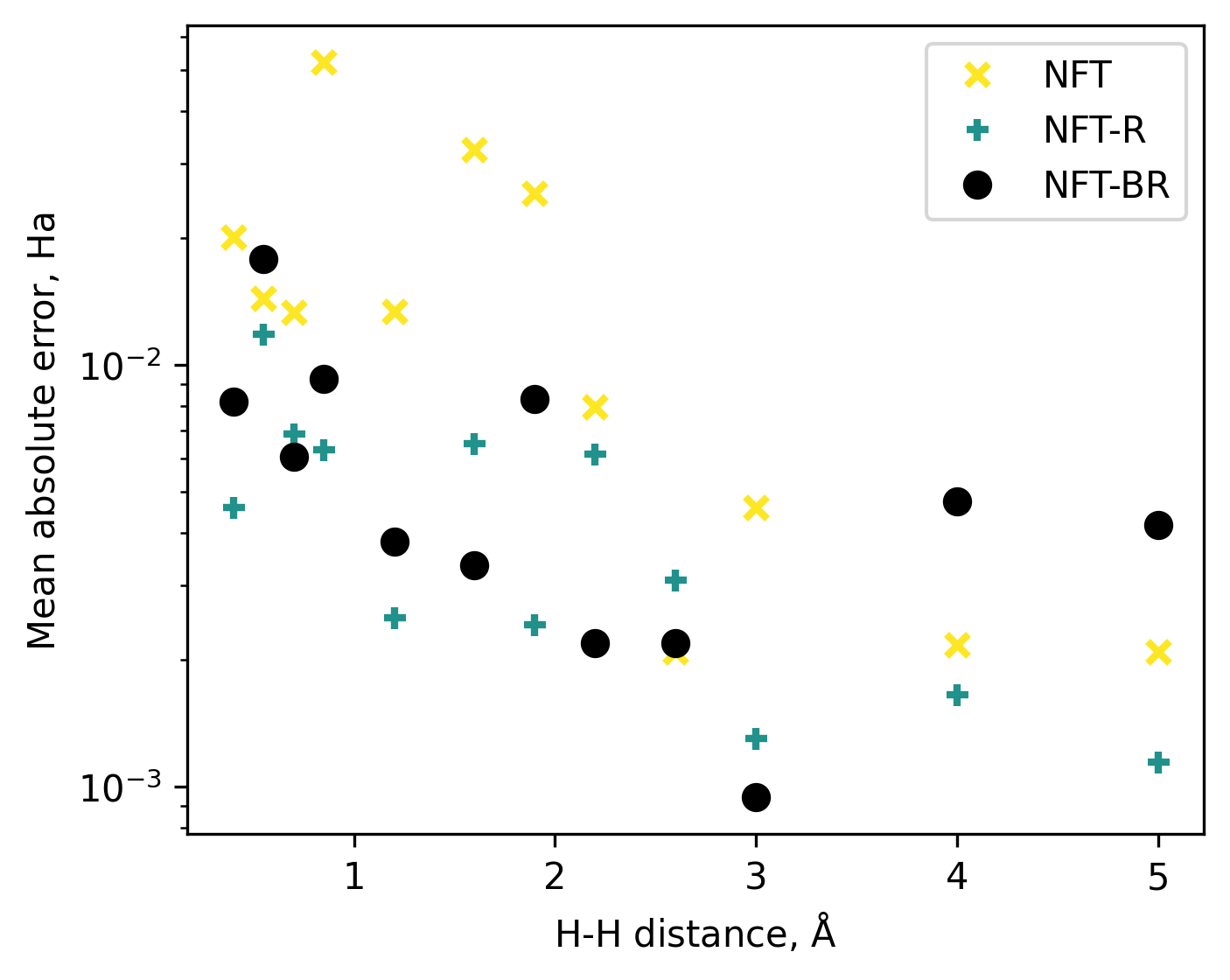}
\caption{Mean absolute error (over five independent repetitions for each internuclear distance) versus distance.
We have SPSA (left) and NFT (right) along with the corresponding augmented versions.}
\label{fig:ae-vs-dist}
\end{figure}

\begin{figure}
\centering
\includegraphics[width=0.7\textwidth]{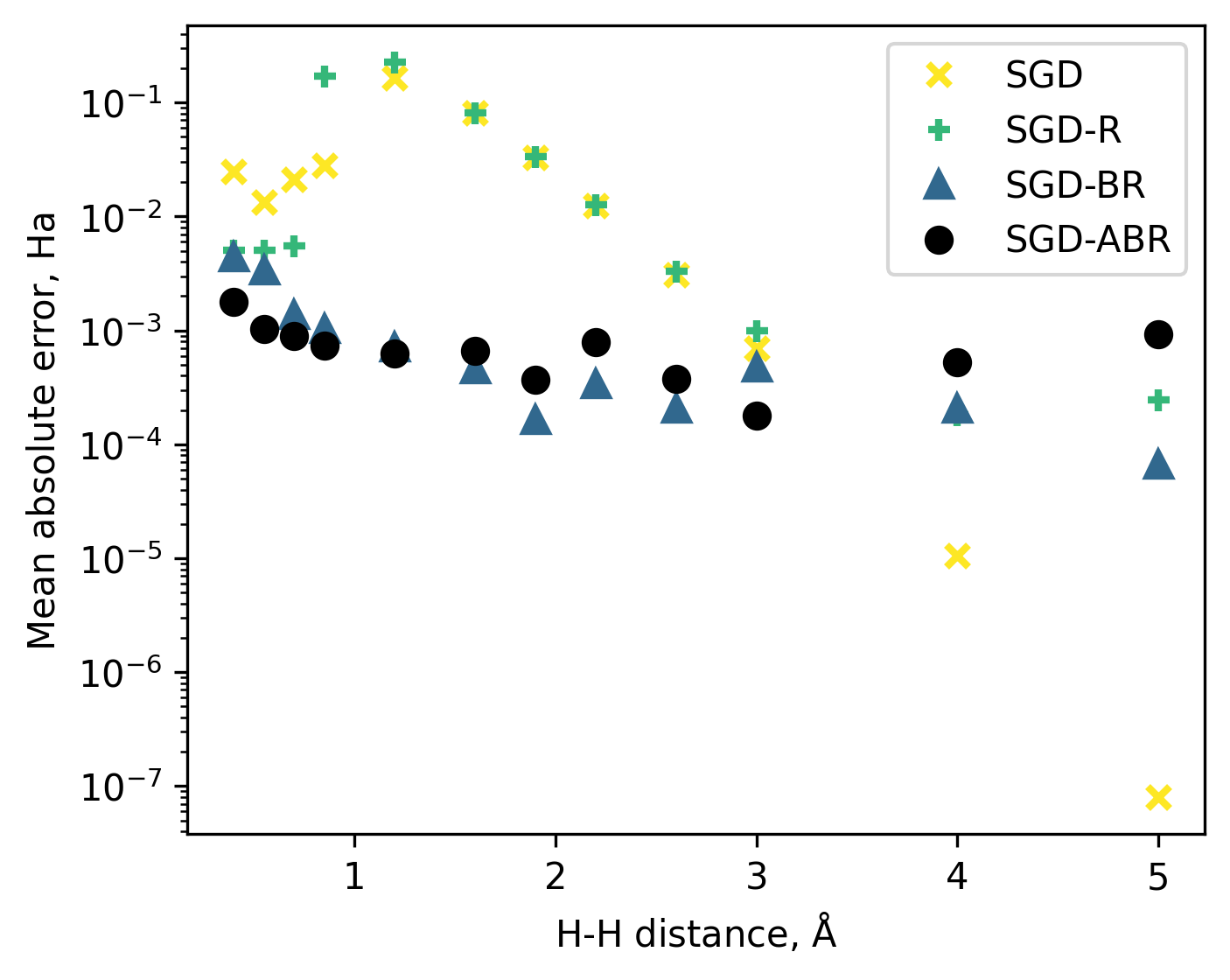}
\caption{Mean absolute error (over five independent repetitions for each internuclear distance) versus distance for SGD and its augmented versions.}
\label{fig:ae-vs-dist-sgd}
\end{figure}

\Cref{fig:boxplot} visualizes the errors, over all independent repetitions and internuclear distances, for each baseline optimization method and its augmented versions.
Meanwhile, \Cref{fig:ae-vs-dist,fig:ae-vs-dist-sgd} show the progression of mean absolute error at each internuclear distance for each method.
In general, where there is error, we must distinguish between sampling error and failure of the optimization method to find optimal ansatz parameters.
The resampled versions of the optimization methods help make this distinction.
For instance, for SPSA and SGD, the addition of resampling does not make significant improvements
(besides eliminating negative errors, which must be due to sampling error).
This indicates that most of the error for those methods is due to poor quality ansatz parameters.
Meanwhile, SPSA-BR improves a lot over SPSA and SPSA-R, and in particular is nearly an order of magnitude more accurate in the region of $1$ to $2$\AA, indicating that bootstrapping yields a significant improvement.
Similarly, we see that SGD improves significantly with bootstrapping and resampling.
Looking at \Cref{fig:ae-vs-dist-sgd}, SGD-BR and SGD-ABR consistently achieve the desired threshold of $10^{-3}$ Hartree across the internuclear distances.
Furthermore, SGD-BR and SGD-ABR are not limited to an accuracy of $10^{-3}$ Hartree;
with a tighter tolerance $\varepsilon_r$ and more optimization iterations we expect them to improve.

While mean absolute error appears similar between SGD-BR and SGD-ABR, \emph{maximum} absolute error over all the distances and independent repetitions is $0.0128$ Hartree for SGD-BR, and $0.0043$ Hartree for SGD-ABR.
See also \Cref{tab:ae-vs-evals}.
This is where adaptive termination proves valuable;
SGD-ABR expends more effort when necessary in order to get closer to a desired accuracy level.

\Cref{fig:boxplot,fig:ae-vs-dist} show that NFT is already a fairly robust optimization method, with resampling appearing to make most of the improvement over the base optimizer.
This makes sense, since NFT can take large steps when updating the ansatz parameters.
Thus, the final parameters depend less on the initial values given to NFT, making bootstrapping less effective.

\subsubsection*{Reduction in unique circuits}
While we have kept the total number of samples the same in our comparisons, the number of \emph{unique} circuits that are executed is an interesting metric to consider.
In many hardware architectures, the overhead of preparing a circuit can be significant, and so once prepared, executing the same circuit many times is relatively efficient.
So, consider \Cref{tab:ae-vs-evals}.
We see that, compared to the base optimizers, adding bootstrapping and resampling results in better mean and maximum absolute error, as well as an order of magnitude fewer unique circuit evaluations.
This is because while resampling contributes many objective evaluations, they are all at the same ansatz parameter values.

\begin{table}
\caption{Accuracy and cost for various methods for $\mathrm{H}_2$ potential energy surface.
Unique circuit evaluations are estimated from the number of function evaluations during optimization.}
\label{tab:ae-vs-evals}
\centering
\begin{tabular}{lccp{3.5cm}}\hline
    Method  & Mean absolute error (Ha)   & Max absolute error (Ha) & Mean unique circuit evaluations \\
\hline
    SPSA    & 0.0174    & 0.0721    & 2341 \\
    SPSA-BR & 0.0056    & 0.0337    & 205 \\
    NFT     & 0.0158    & 0.1180    & 2340 \\
    NFT-BR  & 0.0059    & 0.0397    & 204 \\
    SGD     & 0.0320    & 0.2040    & 2340 \\
    SGD-BR  & 0.0011    & 0.0128    & 198 \\
    SGD-ABR & 0.0007    & 0.0043    & 203 \\
\hline
\end{tabular}
\end{table}

\section{Conclusions}

We have presented the concept of variational adiabatic quantum computing (VAQC), its practical implementation as a bootstrapping procedure, and some other practical numerical techniques to improve VQE.
When applied to the calculation of the ground state energies of simple molecules, these techniques yield consistently more accurate results compared to a more basic implementation of VQE.
This is particularly evident when calculating a potential energy surface, where a series of high accuracy energies are needed.

We have established some basic theoretical guarantees for the performance of VAQC.
While the conditions under which these guarantees hold are difficult to verify, VAQC performs quite well as a heuristic for improving VQE.
We expect the VAQC approach to benefit from further research in AQC;
analysis of the spectral gaps of Hamiltonians of interest should impact the approach discussed here, and homotopy maps could inspire novel ans\"atze.
Meanwhile, VAQC may be a useful tool for better understanding or simulating adiabatic computations.
\subsection*{Acknowledgements}
The authors thank Bryce Fuller for discussions on the manuscript and Michael Andrews for help on some of the topology concepts.
\appendix
\section{Proof of Proposition~\ref{prop:continuous_eigenvector}}
\label{sec:ce_proof}

The following proof essentially elaborates on Remarks~5.10 and 4.3 in \cite[Chapter II]{kato}.
A fundamental tool in that analysis is the resolvent operator
\[
    R(z,t) = (H(t) - zI)\inv
\]
where $z$ is a complex number and $I$ is the identity on $\hilbert$.
Evidently the resolvent is only defined for $(z,t)$ such that $z$ is not an eigenvalue of $H(t)$.
So, let $\mathcal{D}_R$ be the preimage under $(z,t) \mapsto H(t) - zI$ of the set of invertible operators.
Then, as the preimage under a continuous mapping of an open set, $\mathcal{D}_R$ is an open subset of $\mathbb{C} \times [0,1]$.
Further, $R$ is well-defined on $\mathcal{D}_R$.
Since $H$ and the inversion of an operator are continuous and further, differentiable, $R$ is continuous and differentiable on $\mathcal{D}_R$ with derivative
$\frac{\partial R}{\partial t}(z,t) = -R(z,t) \dot{H}(t) R(z,t)$.
This shows that $\frac{\partial R}{\partial t}$ is continuous on $\mathcal{D}_R$.
Importantly, for any $s$,
$\frac{R(z,t) - R(z,s)}{t - s} \to \frac{\partial R}{\partial t}(z,s)$ pointwise in $z$ as $t \to s$, and the convergence is uniform on any compact subset of $\mathbb{C}$ not containing an eigenvalue of $H(s)$.%
\footnote{A little more detail here:
if $R_{\delta}$ is defined on $\mathcal{D}_R$ as the difference quotient when $t \neq s$, and as $\frac{\partial R}{\partial t}(z,s)$ for $t=s$, then it is a continuous function on $\mathcal{D}_R$.
Thus $R_{\delta}(\cdot, t)$ must converge uniformly to $R_{\delta}(\cdot, s) = \frac{\partial R}{\partial t}(\cdot,s)$ on compact subsets $K$ of $\mathbb{C}$ such that $K \times (s - \epsilon, s+\epsilon) \subset \mathcal{D}_R$, for some positive real $\epsilon$.
}
That $\mathcal{D}_R$ is open is critical here;
if $K$ is a compact subset of the complex plane which does not contain an eigenvalue of $H(s)$, then $K \times \set{s}$ is a subset of $\mathcal{D}_R$, and so $K \times \set{t}$, for $t$ near $s$, is also a subset of $\mathcal{D}_R$.

With this, we can define an eigenprojection $P$ as
\[
P = -\frac{1}{2 \pi i} \oint_{\Gamma} R(z,s) dz.
\]
Here, $P$ is the eigenprojection corresponding to all eigenvalues of $H(s)$ enclosed by the positively-oriented closed curve $\Gamma$ in the complex plane.
This characterization of the eigenprojection also appears in proofs of the adiabatic theorem;
see for instance \cite{jansen2007bounds}.
Another basic fact we will use is that the eigenvalues of $H$ may be chosen as continuous functions on $[0,1]$, since $H$ is continuous and the eigenvalues are real.
This means that for $t$ close to $s$, $\Gamma$ encloses the ``same'' eigenvalues.

\begin{proposition*}
Assume $H$ is continuously differentiable on $[0,1]$.
Assume that there exists real constant $\delta > 0$ such that for each $t$, the lowest and second lowest eigenvalues (of $H(t)$), $\lambda_0(t)$ and $\lambda_1(t)$ respectively, are separated by $\delta$:
$\lambda_0(t) + \delta < \lambda_1(t)$.
Then there exists a continuous mapping $\phi^* : [0,1] \to \hilbert$ such that $\phi^*(t)$ is a ground state of $H(t)$ for all $t \in [0,1]$.
\end{proposition*}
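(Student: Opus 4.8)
The plan is to follow the resolvent approach already set up in the preamble: build a continuous family of spectral projections onto the ground state of $H(t)$, and then extract a continuous unit eigenvector from that family.

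First I would fix the ground-state eigenprojection. By the uniform gap hypothesis $\lambda_0(t) + \delta < \lambda_1(t)$, for each $s \in [0,1]$ the lowest eigenvalue $\lambda_0(s)$ is isolated from the rest of the spectrum by at least $\delta$. I would choose a small positively-oriented circle $\Gamma_s$ in the complex plane, centered near $\lambda_0(s)$ with radius less than $\delta/2$, so that $\Gamma_s$ encloses $\lambda_0(s)$ and no other eigenvalue of $H(s)$. Since the eigenvalues vary continuously in $t$ (as noted in the preamble), for all $t$ in a neighborhood of $s$ the same curve $\Gamma_s$ still encloses exactly the ground-state eigenvalue $\lambda_0(t)$, and I set
\[
P(t) = -\frac{1}{2\pi i} \oint_{\Gamma_s} R(z,t)\, dz,
\]
which is the orthogonal projection onto the ground-state eigenspace of $H(t)$. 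The gap forces this eigenspace to have constant finite dimension; in the non-degenerate case (which the phrasing ``lowest and second lowest eigenvalues'' suggests) $P(t)$ is rank one.

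Second I would establish that $P$ is continuous on $[0,1]$. The key input, already recorded in the preamble, is that $R(\cdot,t) \to R(\cdot,s)$ uniformly on compact subsets of $\mathbb{C}$ avoiding the spectrum of $H(s)$; applying this on the fixed contour $\Gamma_s$ gives $\norm{P(t) - P(s)} \to 0$ as $t \to s$. Because the contour integral depends only on which eigenvalue is enclosed and not on the precise contour, the locally-defined projections agree on overlaps, so they assemble into a single well-defined, operator-norm-continuous, projection-valued map on all of $[0,1]$.

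Third I would extract the continuous eigenvector. Fix any $s$ and a unit vector $v$ in the range of $P(s)$; by continuity $P(t)v \neq 0$ for $t$ near $s$, so $\phi_s(t) = P(t)v / \norm{P(t)v}$ is a continuous unit ground state there. Covering the compact interval $[0,1]$ by finitely many such neighborhoods, on each overlap two local choices differ only by a continuous unit-modulus phase (a continuous unitary on the eigenspace in the degenerate case). Because the base $[0,1]$ is an interval, hence contractible, the associated eigenbundle is trivial, so these phases can be reconciled inductively along the cover to glue the local sections into a single continuous $\phi^* : [0,1] \to \hilbert$ with $\phi^*(t)$ a ground state of $H(t)$ for every $t$.

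The main obstacle is the continuity of $P$ together with the phase-gluing: one must check that the contour can be chosen consistently as $t$ ranges over $[0,1]$, that the uniform resolvent convergence transfers to operator-norm continuity of the projection, and then resolve the $U(1)$ phase ambiguity when patching local eigenvectors — which is possible precisely because $[0,1]$ is contractible. An alternative that avoids the phase bookkeeping altogether is Kato's transformation function \cite{kato}: using the differentiability of $P$ (inherited from the differentiability of the resolvent established in the preamble), solve the operator ODE $\dot{U} = [\dot{P}, P]\,U$ with $U(0) = I$, verify the intertwining relation $U(t) P(0) = P(t) U(t)$, and set $\phi^*(t) = U(t)\,\phi^*(0)$, which produces a continuous (indeed differentiable) global eigenvector directly.
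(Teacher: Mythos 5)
Your primary argument is correct but takes a genuinely different route from the paper, and your closing ``alternative'' is in fact exactly the paper's proof. The paper never patches local sections: after showing the eigenprojection $P_0$ is continuously differentiable (via the contour integral of $\partial R/\partial t$), it forms the commutator $Q = \dot{P}_0 P_0 - P_0\dot{P}_0$, solves the linear ODE $\dot{U}=QU$, $U(0)=I$, proves $U$ is invertible and satisfies the intertwining relation $P_0(t)U(t)=U(t)P_0(0)$, and sets $\phi^*(t)=U(t)\phi^0/\norm{U(t)\phi^0}$ --- a single global construction with no phase bookkeeping, which also yields differentiability (not just continuity) and handles a degenerate ground eigenspace with no extra work, since $U(t)$ carries the whole eigenspace at $0$ onto the eigenspace at $t$. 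Your main route --- operator-norm continuity of $P$ from uniform resolvent convergence on the fixed contour, local sections $P(t)v/\norm{P(t)v}$, and inductive gluing over a finite cover of $[0,1]$ --- buys something the paper's does not: it uses only continuity of $P$, so it would survive weakening $H\in C^1$ to $H$ merely continuous. Over an interval the gluing is also more elementary than you make it: at each junction $t_k$ simply restart the next local section from the reference vector $v=\phi^*(t_k)$ itself, which eliminates the phase ambiguity entirely, requires no appeal to contractibility or bundle triviality, and repairs the one loose spot in your write-up (two unit vectors in a multidimensional eigenspace do not canonically ``differ by a continuous unitary'').
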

\begin{proof}
To start, we show that we have a continuously differentiable eigenprojection.
Fix $s \in [0,1]$.
We may choose a closed curve $\Gamma$ enclosing the lowest eigenvalue $\lambda_0(s)$, and not enclosing or passing through any other eigenvalue
(this is possible by the assumption that the lowest eigenvalue is separated by $\delta$).
Define
\[
    P_0(s) = -\frac{1}{2 \pi i} \oint_{\Gamma} R(z, s) dz.
\]
Consequently, $P_0(s)$ is the eigenprojection corresponding to $\lambda_0(s)$.
Critically, this expression continues to hold for all $t$ in a neighborhood of $s$ by the continuity of $\lambda_0$.
Then note that
\[
    \frac{P_0(t) - P_0(s)}{t - s} = -\frac{1}{2 \pi i} \oint_{\Gamma} \frac{R(z,t) - R(z,s)}{t - s} dz
\]
for $t$ sufficiently close to $s$.
Taking the limit as $t$ approaches $s$, we obtain
$\dot{P}_0(s) = -\sfrac{1}{2 \pi i} \oint_{\Gamma} \frac{\partial R}{\partial s}(z,s) dz$
by the uniform convergence of $\frac{R(z,t) - R(z,s)}{t - s}$ mentioned above.
Once again, this expression continues to hold for $t$ near $s$, so $P_0$ is continuously differentiable on a neighborhood of $s$, and since $s$ was arbitrary, it is continuously differentiable on $[0,1]$.

Next we show that there exists a continuous $U$ such that for all $t$,
$U(t)$ is invertible and
$P_0(t) = U(t) P_0(0) U\inv(t)$.
Since $P_0$ is a projection,
$(P_0(t))^2 = P_0(t)$
for all $t$, so differentiating yields
\begin{equation}
\label{eq:one}
P_0 \dot{P}_0 + \dot{P}_0 P_0 = \dot{P}_0;
\end{equation}
multiplying from the right and left by $P_0$ gives
\begin{equation}
\label{eq:two}
P_0 \dot{P}_0 P_0 = 0.
\end{equation}
Then let
$Q(t) = \dot{P}_0(t) P_0(t) - P_0(t) \dot{P}_0(t)$
be the commutator of $\dot{P}_0$ and $P_0$.
We have that
\begin{equation}
\label{eq:three}
\dot{P}_0  = QP_0 - P_0 Q
\end{equation}
using the definition of $Q$, \eqref{eq:one}, and \eqref{eq:two}.
We construct $U$ as the solution of the initial value problem in linear ordinary differential equations
\begin{equation}
\label{eq:lode}
    \dot{V}(t) = Q(t) V(t), \forall t \in [0,1], \qquad V(0) = V_0.
\end{equation}
In particular, for the initial value $V_0 = I$, we define the solution as $U$.
We note that a general solution of \eqref{eq:lode} may be given by $V(t) = U(t)V_0$.
The inverse of $U$ is the solution of the initial value problem
\[
    \dot{W}(t) = -W(t) Q(t), \forall t \in [0,1], \qquad W(0) = I.
\]
For a solution $W$ of the above, we see that
$\frac{d (WU)}{dt} = W\dot{U} + \dot{W}U = WQU - WQU = 0$;
thus $W(t)U(t)$ is constant and must equal its initial value $W(0)U(0) = I$ for all $t$.
This shows that $U$ is invertible.
Then note that $P_0 U$ is a solution of \eqref{eq:lode} with initial value $P_0(0)$:
\[
    \frac{d (P_0 U)}{dt} = \dot{P}_0 U + P_0 \dot{U} = (\dot{P}_0 + P_0 Q) U = Q (P_0 U)
\]
where we have used \eqref{eq:three} in the last equality.
Since the solution of \eqref{eq:lode} is unique for a given initial value, $P_0(t) U(t)$ must coincide with the general solution noted above $U(t) P_0(0)$.
In other words
$P_0(t) U(t) = U(t) P_0(0)$
for all $t$;
thus we have
$P_0(t) = U(t) P_0(0) U\inv(t)$
as desired.

Finally, let $\phi^0$ be any (non-zero) vector in the eigenspace of $H(0)$ corresponding to $\lambda_0(0)$
(so $P_0(0)\phi^0 = \phi^0$).
Then define $\phi(t) = U(t)\phi^0$.
Note that
\[\begin{aligned}
P_0(t)\phi(t)
	&= (U(t) P_0(0) U\inv(t))(U(t)\phi^0)\\
	&= U(t) P_0(0)\phi^0 \\
	&= U(t) \phi^0 = \phi(t).
\end{aligned}\]
So we see that $\phi(t)$ is an eigenvector corresponding to the lowest eigenvalue of $H(t)$, for all $t \in [0,1]$.
Since $U(t)$ is invertible, $\phi(t)$ is non-zero, so we can normalize it to obtain
$\phi^*(t) \equiv \frac{\phi(t)}{\norm{\phi(t)}}$.
We have that $\phi^*(t)$ is a well-defined, normalized ground state of $H(t)$ for each $t$, and $\phi^*$ is continuous.
\end{proof}
\section{Optimization methods}
\label{sec:optimization_details}
We describe in detail the optimization methods used in our experiments.

\textbf{SGD} (stochastic gradient descent) is an optimization method defined by the iteration
\[
\theta^{k+1} = \theta^k - \gamma^k g^k.
\]
Here, $\gamma^k$ is the step length, and $g^k$ is the current gradient estimate.
Gradients are obtained using the parameter shift rule \cite{schuld2019evaluating,crooks_2019}
(we use RY-type ans\"atze for $\psi$, for which the parameter shift rule is applicable).
Along with evaluation of the objective estimate at the current parameter point, this makes the cost per iteration of the SGD method $2 p + 1$ objective evaluations.
Step lengths $\gamma^k$ decrease to zero with an asymptotic behavior like $\sfrac{1}{k}$;
this is consistent with most convergence theories \cite{bertsekas_2000}.
We use a step length equal to $1$ for the first ten iterations, then decrease it by setting it to $\sfrac{1}{(k-10)}$, for $k > 10$.

The addition of a momentum term is a common variant of SGD.
The idea is that this can help escape local minima;
see \cite{diakonikolas2019generalized,loizou2020momentum} for various perspectives and references.
Again, our point here is to show improvement in nearly any optimizer;
we use a basic form of SGD for simplicity.

\textbf{SPSA} (simultaneous perturbation stochastic approximation) is a gradient-free stochastic optimization method \cite{spall92}.
It has been used successfully with VQE in previous studies \cite{kandala_2017}.
Each iteration consists of two objective evaluations, essentially giving a derivative approximation in a particular coordinate direction.
The algorithm randomizes these search directions and carefully controls the sequence of steplengths that it takes.
Hyperparameter settings are the default in Qiskit, with the exception of the initial step size which is set to $1.0$.

\textbf{NFT} (Nakanishi-Fujii-Todo) is a sequential coordinate minimization method \cite{nakanishi2020sequential}.
When $\psi$ is an RY-type ansatz, the objective $f$ is sinusoidal as a function of a single parameter $\theta_i$ with the others fixed.
The method leverages this fact to estimate $f$ as a function of a single parameter and analytically minimize it in this coordinate direction.
NFT requires two to three objective evaluations per iteration (the frequency of the extra evaluation is a hyperparameter -- we use the default in its implementation in Qiskit).
In the noise-free setting, at least, sequential minimization methods enjoy reasonable convergence guarantees \cite{wright2015coordinate}.
\section{Lithium hydride Hamiltonian}
\label{sec:hamiltonian}

Here we specify the Hamiltonian used in the lithium hydride ground state study of \Cref{sec:lih}.
We construct the Hamiltonian at an internuclear distance of $2.5${\AA} using PySCF \cite{pyscf} and the STO-3G basis set.
We consider the LiH molecule oriented with the bond along the x-direction.
For this orientation, the unoccupied 2p$_y$ and 2p$_z$ orbitals are perpendicular to the LiH bond and along with the core orbitals, are assumed frozen;
thus we do not consider the frozen orbitals in the quantum simulations.
This Hamiltonian is then transformed into the particle/hole representation \cite{barkoutsos_2018} and mapped to a qubit Hamiltonian using parity mapping \cite{bravyi_2002, seeley_2012}, which results in a four-qubit representation after applying qubit tapering to account for symmetries due to spin up and spin down particle number conservation \cite{bravyi2017tapering}.
The full Hamiltonian, written as a linear combination of tensor products of Pauli operators, has 100 terms.
See \Cref{tab:hamiltonian}.

\begin{table}
\caption{LiH Hamiltonian terms, to six decimal places.}
\label{tab:hamiltonian}
\centering
\begin{tabular}{|r|r|r|r|r|}\hline
0.567662 IIII &-0.025425 IZXX &0.013812 ZXXX &-0.011521 XZYY &-0.008083 ZXZX \\
0.245088 IIZI &0.025425 IZYY &0.013812 IXXX &0.011521 XIYY &-0.008083 IXZX \\
-0.245088 ZIII &0.025425 XXIZ &-0.013812 ZXYY &0.011521 XZXX &0.008083 ZXIX \\
-0.190085 IIZZ &-0.025425 YYIZ &-0.013812 IXYY &-0.011521 XIXX &0.008083 IXIX \\
-0.190085 ZZII &-0.019768 IIXZ &-0.013812 XXZX &0.010474 IIXX &-0.006835 ZXXZ \\
-0.107219 IZIZ &-0.019768 IIXI &0.013812 YYZX &-0.010474 IIYY &-0.006835 IXXZ \\
0.101581 IZII &0.019768 XZII &0.013812 XXIX &0.010474 XXII &-0.006835 ZXXI \\
-0.101581 IIIZ &-0.019768 XIII &-0.013812 YYIX &-0.010474 YYII &-0.006835 IXXI \\
0.098833 IZZI &-0.018582 XXZI &-0.012909 ZXZI &-0.009880 XZXI &-0.006835 XZZX \\
0.098833 ZIIZ &0.018582 YYZI &-0.012909 IXZI &0.009880 XIXI &0.006835 XZIX \\
-0.096556 ZIZI &0.018582 ZIXX &-0.012909 ZIZX &-0.009880 XZXZ &0.006835 XIZX \\
0.079438 ZZZZ &-0.018582 ZIYY &0.012909 ZIIX &0.009880 XIXZ &-0.006835 XIIX \\
-0.060240 ZZZI &0.017442 IZZX &-0.011861 XZZI &0.009298 ZZXI &-0.004511 ZXZZ \\
0.060240 ZIZZ &-0.017442 IZIX &0.011861 XIZI &0.009298 ZZXZ &-0.004511 IXZZ \\
-0.053253 IZZZ &0.017442 ZXIZ &-0.011861 ZIXZ &-0.009298 XZZZ &0.004511 ZZZX \\
0.053253 ZZIZ &0.017442 IXIZ &-0.011861 ZIXI &0.009298 XIZZ &-0.004511 ZZIX \\
0.033028 XXXX &0.016652 IZXZ &-0.011521 XXXZ &-0.009044 IIZX &-0.003631 XXZZ \\
-0.033028 YYXX &0.016652 IZXI &0.011521 YYXZ &0.009044 IIIX &0.003631 YYZZ \\
-0.033028 XXYY &0.016652 XZIZ &-0.011521 XXXI &0.009044 ZXII &0.003631 ZZYY \\
0.033028 YYYY &-0.016652 XIIZ &0.011521 YYXI &0.009044 IXII &-0.003631 ZZXX \\ \hline
\end{tabular}
\end{table}

\clearpage
\bibliography{bib/main}
\end{document}